\definecolor{Red}{rgb}{1.0, 0, 0}
\newtheorem{lemma}{Lemma}
\newtheorem{Propos}{Proposition}
\newtheorem{Conject}{Conjecture}
\newtheorem{theorem}{Theorem}
\begin{document}

\title{A Lower Bound on the Entropy Rate for\\a Large Class of Stationary Processes and\\its Relation to the Hyperplane Conjecture}
\author{%
\IEEEauthorblockN{Meik D\"orpinghaus}
\IEEEauthorblockA{Vodafone Chair Mobile Communications Systems\\Technische Universit\"at Dresden\\ 01062 Dresden, Germany\\ meik.doerpinghaus@tu-dresden.de}

\thanks{This work was supported in part by the Deutsche Forschungsgemeinschaft (DFG) in the framework of the cluster of excellence EXC 1056 `Center for Advancing Electronics Dresden (cfaed)' and by the DFG within the Collaborative Research Center SFB~912 `Highly Adaptive Energy-Efficient Computing (HAEC)'.}
\thanks{During the course of this work the author was a visiting assistant professor at the Department of Electrical Engineering, Stanford University, CA, U.S.A.}}

\maketitle
\thispagestyle{empty}
\pagestyle{empty}

\begin{abstract}
We present a new lower bound on the differential entropy rate of stationary processes whose sequences of probability density functions fulfill certain regularity conditions. This bound is obtained by showing that the gap between the differential entropy rate of such a process and the differential entropy rate of a Gaussian process with the same autocovariance function is bounded. This result is based on a recent result on bounding the Kullback-Leibler divergence by the Wasserstein distance given by Polyanskiy and Wu. Moreover, it is related to the famous \emph{hyperplane conjecture}, also known as \emph{slicing problem}, in convex geometry originally stated by J. Bourgain. Based on an entropic formulation of the hyperplane conjecture given by Bobkov and Madiman we discuss the relation of our result to the hyperplane conjecture.\looseness-1  
\end{abstract}

\IEEEpeerreviewmaketitle

\section{Introduction}\label{Section_Introduction}
We consider a discrete-time stationary stochastic process $\{\mathsf{X}\}$. A finite sequence of this process is given by the vector $\mathsf{X}^{n}=[\mathsf{X}_1,\hdots,\mathsf{X}_{n}]$ in $\mathbb{R}^{n}$ with density $f^{(n)}$. The differential entropy rate of the process $\{\mathsf{X}\}$ is defined as
\begin{IEEEeqnarray}{rCL}
h'(\{\mathsf{X}\})&=&\lim_{n\rightarrow\infty}\frac{1}{n}h(\mathsf{X}^n)
\end{IEEEeqnarray}
when the limit exists and where $h(\cdot)$ denotes the differential entropy.

In general, no closed form solutions for the entropy rate are available. However, for a stationary Gaussian random process $\{\mathsf{X}_{\textrm{Gauss}}\}$ having the same autocovariance function and the same mean as the process $\{\mathsf{X}\}$ the entropy rate can be expressed in closed form.
Let 
\begin{IEEEeqnarray}{rCL}
r(l)&=&\mathrm{E}[(\mathsf{X}_{k+l}-\mathrm{E}[\mathsf{X}_{k+l}])(\mathsf{X}_k-\mathrm{E}[\mathsf{X}_k])]\label{DefAutocorr}
\end{IEEEeqnarray}
be the autocovariance function of the process $\{\mathsf{X}\}$. For a blocklength of $n$ the elements of its covariance matrix of size $n\times n$ are given by 
\begin{IEEEeqnarray}{rCL}
[\mathbf{R}^{(n)}_{\mathsf{X}}]_{kl}&=&r(k-l).\label{DEFCovMat}
\end{IEEEeqnarray}
Then the entropy rate of $\{\mathsf{X}_{\textrm{Gauss}}\}$ is given by, see, e.g., \cite[Ch.~12.5]{CoverBook2}\footnote{Unless otherwise stated, in this paper $\log$ is to an arbitrary but fixed base.}
\begin{IEEEeqnarray}{rCL}
h'(\{\mathsf{X}_{\textrm{Gauss}}\})&=&\lim_{n\rightarrow\infty}\frac{1}{2n}\log\big((2\pi e)^n\det\big(\mathbf{R}^{(n)}_{\mathsf{X}}\big)\big)\nonumber\\
&=&\frac{1}{2}\int_{-\frac{1}{2}}^{\frac{1}{2}}\log\left(2\pi e S_{\mathsf{X}}(f)\right)df\label{EntRateSpec}
\end{IEEEeqnarray}
where (\ref{EntRateSpec}) follows with Szeg\"o's theorem on the asymptotic eigenvalue distribution of Hermitian Toeplitz matrices \cite[pp.~64-65]{Szgo58},\cite{Gray_ToeplitzReview} with the power spectral density $S_{\mathsf{X}}(f)$ given by\looseness-1
\begin{IEEEeqnarray}{rCL}
S_{\mathsf{X}}(f)&=&\sum_{m=-\infty}^{\infty}r(m)e^{-j 2\pi m f}, \quad -\frac{1}{2}<f < \frac{1}{2}\label{Def_PSD}
\end{IEEEeqnarray}
with $j=\sqrt{-1}$.

As for a given covariance matrix Gaussian random variables maximize the differential entropy \cite[Th.~8.6.5]{CoverBook2}, it follows that the entropy rate $h'(\{\mathsf{X}\})$ is upper-bounded by 
\begin{IEEEeqnarray}{rCL}
h'(\{\mathsf{X}\})&\le&h'(\{\mathsf{X}_{\textrm{Gauss}}\}).\label{NewGaussUpperBound}
\end{IEEEeqnarray}

While for given second moments an upper bound on the differential entropy and the corresponding rate can be easily given by considering the maximum entropy property of the Gaussian distribution as in (\ref{NewGaussUpperBound}), there exist not many general lower bounds on differential entropy and the corresponding rate. One approach of lower-bounding the differential entropy has been presented in \cite[Theorem~I.1]{BobkovMadiman11} where a lower bound on the entropy per coordinate is constructed based on a Gaussian density having the same maximum density as the actual density of the considered random vector. I.e., differently to matching the first two moments when using the maximum entropy property of the Gaussian distribution for upper-bounding, here the maximum of the density between the actual distribution and a Gaussian distribution is matched. Related to this, in \cite[Corollary~IX.1]{BobkovMadiman11} under specific conditions the entropy rate of a stationary process can be bounded away from $-\infty$. 

Differently, the lower bound on the entropy rate presented in this paper is based on an upper bound on the Kullback-Leibler divergence between the actual density and a Gaussian density with the same first two moments. In this regard, note that\looseness-1
\begin{IEEEeqnarray}{rCL}
h'(\{\mathsf{X}_{\textrm{Gauss}}\})-h'(\{\mathsf{X}\})&=&
\lim_{n\rightarrow\infty}\frac{1}{n}\left[h(\mathsf{X}_{\textrm{Gauss}}^{n})-h(\mathsf{X}^{n})\right]\nonumber\\
&=&\lim_{n\rightarrow\infty}\frac{1}{n}D(f^{(n)})\label{NewLimDiffKL}
\end{IEEEeqnarray}
where $\mathsf{X}_{\textrm{Gauss}}^{n}=[\mathsf{X}_{\textrm{Gauss},1},\hdots,\mathsf{X}_{\textrm{Gauss},n}]$ is the vector containing the elements $1,\hdots,n$ of the process $\{\mathsf{X}_{\textrm{Gauss}}\}$. Moreover, $D(f^{(n)})$ is the Kullback-Leibler divergence or relative entropy between the density $f^{(n)}$ and a corresponding Gaussian distribution with the same mean and the same covariance matrix as $\mathsf{X}^{n}\sim f^{(n)}$ and with density $g^{(n)}$, i.e.
\begin{IEEEeqnarray}{rCL}
D(f^{(n)})&=&D(f^{(n)}\Vert g^{(n)})=\int f^{(n)}(x^{n})\log \frac{f^{(n)}(x^{n})}{g^{(n)}(x^{n})}dx^{n}\nonumber\\
\end{IEEEeqnarray}
where the superscript at $x^{n}$ denotes a vector of dimension $n$. For (\ref{NewLimDiffKL}) we have used that, see, e.g., \cite[pp.~254-255]{CoverBook2} 
\begin{IEEEeqnarray}{rCL}
D(f^{(n)})&=&h(\mathsf{X}_{\textrm{Gauss}}^{n})-h(\mathsf{X}^{n})\label{KL_GaussEntropDiff}
\end{IEEEeqnarray}
as $\mathsf{X}_{\textrm{Gauss}}^{n}$ is Gaussian with the same mean and the same covariance matrix as $\mathsf{X}^{n}$.

In the present paper, we show that 
\begin{IEEEeqnarray}{rCL}
\frac{1}{n}D(f^{(n)})&\le&c<\infty\label{NewMainStatement}
\end{IEEEeqnarray}
for all $n$ if the sequence of densities $f^{(n)}$ fulfills certain regularity conditions and, hence, that
\begin{IEEEeqnarray}{rCL}
\lim_{n\rightarrow\infty}\frac{1}{n}D(f^{(n)})&\le&c<\infty.\label{NewMainStatementLim}
\end{IEEEeqnarray}
Here, $c$ is a uniform constant independent of $n$. 

The proof of (\ref{NewMainStatement}) is based on a recent result of bounding the Kullback-Leibler divergence by the Wasserstein distance given by Polyanskiy and Wu \cite{PolyanskiyWu2015}. More specifically, we show that the LHS of (\ref{NewMainStatement}) is bounded away from infinity for all $n$ if the sequence of probability density functions $f^{(n)}$ is $(c_1,c_2)$-regular with respect to the following definition given in \cite{PolyanskiyWu2015}.

A probability density function $f^{(n)}$ on $\mathbb{R}^{n}$ is $(c_1,c_2)$-regular if $c_1>0$, $c_2\ge 0$, and
\begin{IEEEeqnarray}{rCL}
\Vert\nabla\log f^{(n)}(x^{n})\Vert&\le& c_1 \Vert x^{n}\Vert +c_2, \qquad \forall x^{n}\in\mathbb{R}^{n}\label{RegularityCond}
\end{IEEEeqnarray}
where $\Vert\cdot\Vert$ denotes the Euclidean distance and with the differential operator $\nabla=\left(\frac{\partial}{\partial x_1},\hdots,\frac{\partial}{\partial x_n}\right)$ for $x^{n}=[x_1,\hdots,x_n]$.  

We state that a sequence of densities $f^{(n)}$ or a process $\{\mathsf{X}\}$ are $(c_1,c_2)$-regular if there exist constants $c_1$ and $c_2$ such that (\ref{RegularityCond}) holds for all $n\in\mathbb{N}$.

In conclusion, we state the following theorem.
\begin{theorem}\label{TheoremBoundStationary}
Let $\{\mathsf{X}\}$ be a stationary stochastic process with autocovariance function $r(l)$ and probability density function $f^{(n)}$ for a sequence of length $n$. Moreover, let $\{\mathsf{X}_{\textrm{Gauss}}\}$ be a Gaussian process with the same mean and the same autocovariance function as the process $\{\mathsf{X}\}$. If the sequence of densities $f^{(n)}$ is $(c_1,c_2)$-regular, it holds that 
\begin{IEEEeqnarray}{rCL}
\frac{1}{n}D(f^{(n)})&=&\frac{1}{n}\left[h(\mathsf{X}_{\textrm{Gauss}}^{n})-h(\mathsf{X}^{n})\right]\le c<\infty,\ \forall n\label{NewTheorem_FiniteDiff}
\end{IEEEeqnarray}
where $c$ is a uniform constant independent of $n$.
\end{theorem}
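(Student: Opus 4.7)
The plan is to control the normalized divergence $\tfrac{1}{n} D(f^{(n)})$ by invoking the Polyanskiy--Wu \cite{PolyanskiyWu2015} bound of the Kullback--Leibler divergence by the quadratic Wasserstein distance $W_2$, and then to exploit stationarity and the common-covariance structure of $f^{(n)}$ and $g^{(n)}$ to show that every ingredient of that bound grows no faster than linearly in $n$.

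Concretely, I would first apply the Polyanskiy--Wu inequality, which under the $(c_1,c_2)$-regularity hypothesis (\ref{RegularityCond}) takes a form schematically like
\begin{IEEEeqnarray}{rCL}
D(f^{(n)}\Vert g^{(n)}) &\le& (c_1\, \sigma_n + c_2)\, W_2(f^{(n)}, g^{(n)}) + \tfrac{c_1}{2}\, W_2^2(f^{(n)}, g^{(n)}),\nonumber
\end{IEEEeqnarray}
where $\sigma_n$ is controlled by the second moments of $f^{(n)}$ and $g^{(n)}$, and the constants $c_1,c_2$ are uniform in $n$ by hypothesis.

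Second, I would bound both $\sigma_n$ and $W_2(f^{(n)},g^{(n)})$ using the fact that $f^{(n)}$ and $g^{(n)}$ share the same mean vector $\mu^{(n)}$ and the same covariance matrix $\mathbf{R}^{(n)}_{\mathsf{X}}$. Anchoring the $W_2$ triangle inequality at the common mean,
\begin{IEEEeqnarray}{rCL}
W_2(f^{(n)},g^{(n)}) &\le& W_2(f^{(n)},\delta_{\mu^{(n)}}) + W_2(\delta_{\mu^{(n)}},g^{(n)}) = 2\sqrt{\operatorname{tr}\mathbf{R}^{(n)}_{\mathsf{X}}} = 2\sqrt{n\, r(0)},\nonumber
\end{IEEEeqnarray}
using $W_2^2(\nu,\delta_a)=\mathrm{E}_\nu\Vert X-a\Vert^2$ together with the stationarity identity $\operatorname{tr}\mathbf{R}^{(n)}_{\mathsf{X}} = n\, r(0)$. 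The same moment calculation shows $\sigma_n = O(\sqrt{n})$.

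Plugging these estimates into the Polyanskiy--Wu bound yields $D(f^{(n)}) \le O(\sqrt n)\cdot O(\sqrt n) + O(n) = O(n)$, whence $\tfrac{1}{n}D(f^{(n)}) \le c$ with $c$ depending only on $c_1$, $c_2$, $r(0)$ and the process mean, but not on $n$. The main obstacle I anticipate is invoking the Polyanskiy--Wu inequality in precisely the form needed here: one must check that the $(c_1,c_2)$-regularity of $f^{(n)}$ (rather than of the Gaussian reference $g^{(n)}$) suffices to control the ``right'' side of the KL divergence, or else arrange a short symmetrization that converts the statement to the form above. A secondary, essentially bookkeeping, task is to verify that the prefactor $\sigma_n$ really grows only like $\sqrt{n}$; this is immediate from stationarity because $\operatorname{tr}\mathbf{R}^{(n)}_{\mathsf{X}}=n\,r(0)$ grows exactly linearly, so no delicate spectral information about $S_{\mathsf{X}}(f)$ is required for the upper bound.
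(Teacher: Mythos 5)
Your proposal is correct and follows essentially the same route as the paper: apply the Polyanskiy--Wu Wasserstein bound with $\mathsf{V}=\mathsf{X}^{n}\sim f^{(n)}$ and $\mathsf{U}=\mathsf{X}_{\textrm{Gauss}}^{n}$, and use stationarity (via $\operatorname{tr}\mathbf{R}^{(n)}_{\mathsf{X}}=n\,r(0)$) to show both the moment prefactor and $W_2(f^{(n)},g^{(n)})$ are $O(\sqrt{n})$, the only cosmetic difference being that you bound $W_2$ by the triangle inequality through $\delta_{\mu^{(n)}}$ while the paper bounds $\sqrt{\mathrm{E}[\Vert\mathsf{X}_{\textrm{Gauss}}^{n}-\mathsf{X}^{n}\Vert^{2}]}$ for an arbitrary coupling via the triangle and Cauchy--Schwarz inequalities. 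The obstacle you flag is resolved exactly as you suspect and without any symmetrization: because $g^{(n)}$ is Gaussian with the same mean and covariance as $f^{(n)}$, one has $D(f^{(n)}\Vert g^{(n)})=h(\mathsf{X}_{\textrm{Gauss}}^{n})-h(\mathsf{X}^{n})$, and the one-sided entropy-difference form $h(\mathsf{U})-h(\mathsf{V})\le\Delta$ of the Polyanskiy--Wu proposition requires $(c_1,c_2)$-regularity only of the second argument, i.e., only of $f^{(n)}$.
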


With Theorem~\ref{TheoremBoundStationary} it follows that
\begin{IEEEeqnarray}{rCL}
h'(\{\mathsf{X}_{\textrm{Gauss}}\})-h'(\{\mathsf{X}\})&=&\lim_{n\rightarrow\infty}\frac{1}{n}\left[h(\mathsf{X}_{\textrm{Gauss}}^{n})-h(\mathsf{X}^{n})\right]\le c\nonumber\\
\end{IEEEeqnarray}
and we get the following lower bound on the differential entropy rate of $\{\mathsf{X}\}$
\begin{IEEEeqnarray}{rCL}
h'(\{\mathsf{X}\})&\ge h'(\{\mathsf{X}_{\textrm{Gauss}}\})- c.
\end{IEEEeqnarray}
In Section~\ref{Section_LowerBound} we give the proof of Theorem~\ref{TheoremBoundStationary} and provide an example on the application of this lower bound.

The result stated in Theorem~\ref{TheoremBoundStationary} is related to the so called \emph{hyperplane conjecture}, also referred to as \emph{slicing problem}, which is a longstanding open problem in convex geometry. The hyperplane conjecture has been initially formulated by J. Bourgain \cite{Bourgain86} and has raised a lot of attention in almost the last 30 years. Its initial version by Bourgain was a slight variant of the following form \cite[Conj.~V.2]{BobkovMadiman11}.

\begin{Conject}[\emph{Hyperplane Conjecture} or \emph{Slicing Problem}]\label{SliceConject}
There exists a universal, positive constant $c$ (not depending on $n$) such that for any convex set $K$ of unit volume in $\mathbb{R}^{n}$, there exists a hyperplane $H$ such that the $(n-1)$-dimensional volume of the section $K\cap H$ is bounded below by $c$.
\end{Conject}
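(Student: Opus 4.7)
The plan is to attack the hyperplane conjecture by the classical reduction to the boundedness of the isotropic constant. First I would recall that, by an affine change of variables together with the Brunn--Minkowski inequality, the minimal hyperplane section of a convex body of unit volume is controlled from below by $1/L_K$, where $L_K$ is the isotropic constant of $K$, defined through the identity $L_K^{2n}\det\mathrm{Cov}(U_K)=(\mathrm{vol}(K))^{-2}\cdot(\text{normalization})$ for $U_K$ uniform on $K$. Standard work of Milman--Pajor and Ball reduces Conjecture~\ref{SliceConject} to proving the existence of a universal constant $C$ with $L_K\le C$ for every convex body $K\subset\mathbb{R}^{n}$ and every $n$. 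So the target becomes: control $L_K$ by a quantity that does not depend on dimension.

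Next I would adopt the entropic formulation of Bobkov and Madiman referenced in the excerpt. Writing $U_K$ for the uniform law on an isotropic convex body $K$, one has the equivalence (up to absolute constants) between $\log L_K$ and the gap $\frac{1}{n}\bigl[h(\mathsf{X}_{\mathrm{Gauss}}^{n})-h(U_K)\bigr]$, where $\mathsf{X}_{\mathrm{Gauss}}^{n}$ is Gaussian with the same covariance as $U_K$. This is exactly the quantity $\frac{1}{n}D(f^{(n)})$ appearing in Theorem~\ref{TheoremBoundStationary} when $f^{(n)}$ is taken to be the density of $U_K$ and the ``process'' is chosen so that its finite-dimensional marginal at block length $n$ coincides with $U_K$. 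The strategy would therefore be: produce, for each convex body $K$, an embedding of $U_K$ as a finite-dimensional marginal of a stationary $(c_1,c_2)$-regular process with constants $c_1,c_2$ independent of both $n$ and $K$, and then invoke Theorem~\ref{TheoremBoundStationary} to conclude $\frac{1}{n}D(f^{(n)})\le c$, hence $L_K\le e^{c}$.

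The hard part — and the reason the conjecture has stood open for nearly forty years — lies precisely in that embedding step, because the density of $U_K$ is not even continuous (let alone differentiable) across $\partial K$, so $\Vert\nabla\log f^{(n)}\Vert$ is infinite on the boundary and the $(c_1,c_2)$-regularity condition~\eqref{RegularityCond} fails outright. Any honest route has to mollify $U_K$, for instance by convolving with a Gaussian of covariance $\varepsilon I$, bound the isotropic constant of the smoothed body in terms of regularity constants that degrade as $\varepsilon\to 0$, and then take $\varepsilon\to 0$ while keeping the Polyanskiy--Wu transportation bound uniform. Making the regularity constants $(c_1,c_2)$ of the mollified density depend only on absolute constants and not on $n$ — equivalently, obtaining a Bakry--\'Emery-type curvature lower bound for log-concave measures that is uniform in dimension — is the crux; to my knowledge, only very recent work of Klartag and Lehec, through stochastic-localization arguments quite different from~\cite{PolyanskiyWu2015}, reaches bounds such as $L_K=O(\log n)$, and even a full resolution via this entropic route would require either a quantitative strengthening of Theorem~\ref{TheoremBoundStationary} or an entirely new method of certifying regularity on the boundary. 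I would flag this as the principal obstacle and would not expect the machinery developed so far in the paper, on its own, to close the gap.
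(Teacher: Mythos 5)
The statement you were asked about is a \emph{conjecture}, not a theorem: it is Bourgain's hyperplane conjecture, which the paper explicitly presents as a longstanding open problem and does not prove. The paper's own position, stated in Section~\ref{Section_Hyperplane}, is that its Theorem~\ref{TheoremPartialHyperPlane} only resembles the entropic form of the conjecture and that ``by the methods applied in the present paper we do not obtain a proof of the hyperplane conjecture.'' So there is no proof in the paper against which to compare yours, and any purported complete proof would have been wrong.

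Your proposal handles this correctly: you do not claim to prove the conjecture, and the obstructions you identify are essentially the ones the paper itself names. Specifically, you point out that the uniform density on a convex body has compact support and is not differentiable across the boundary, so the $(c_1,c_2)$-regularity condition~(\ref{RegularityCond}) fails outright; the paper makes the same observation (``a regular density fulfilling~(\ref{RegularityCond}) has infinite support on $\mathbb{R}^{n}$''). You also correctly flag that even where Theorem~\ref{TheoremPartialHyperPlane} applies, the constant $c$ depends on the particular sequence of densities (through $c_1$, $c_2$, and the second moments), whereas the conjecture demands a single universal constant over all log-concave densities --- this is exactly the ``significant difference'' the paper emphasizes. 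Your additional remarks on the reduction to the isotropic constant via Milman--Pajor and Ball, and on the mollification difficulty, go somewhat beyond what the paper discusses but are consistent with it. In short: the conjecture remains open, your assessment of why the paper's machinery cannot close it is accurate, and your refusal to manufacture a proof is the right call.
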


There exist several equivalent formulations of the hyperplane conjecture having a geometric or a functional analytic flavor \cite{BobkovMadiman11}. As summarized in \cite{BobkovMadiman11}, Milman and Pajor \cite{MilmanPajor} looked like Bourgain \cite{Bourgain86} on a setting of centrally symmetric, convex bodies, while Ball's formulation of the conjecture \cite{Ball} states that the isotropic constant of a log-concave measure in any Euclidean space is bounded above by a universal constant independent of the dimension. In addition to these formulations, Bobkov and Madiman gave the following entropic form of the hyperplane conjecture.

\begin{Conject}[Entropic Form of the Hyperplane Conjecture]\cite[Conjecture V.4]{BobkovMadiman11}\label{SlicingProb_Entrop}
For any log-concave density $f^{(n)}$ on $\mathbb{R}^{n}$ and some universal constant $c$
\begin{IEEEeqnarray}{rCL}
\frac{D(f^{(n)})}{n}&\le& c.\label{MainIneqConjHyp}
\end{IEEEeqnarray}
\end{Conject}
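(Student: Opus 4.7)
The plan is to attack Conjecture~\ref{SlicingProb_Entrop} by reducing the entropic inequality to the classical geometric form of the hyperplane conjecture via the \emph{isotropic constant}. First, since $D(f^{(n)})$ is invariant under simultaneous invertible affine transformations of $f^{(n)}$ and its matched-moment Gaussian $g^{(n)}$, I would pass to the isotropic case (zero mean, identity covariance). Then $h(g^{(n)})/n = \tfrac{1}{2}\log(2\pi e)$, and by (\ref{KL_GaussEntropDiff}) the claim (\ref{MainIneqConjHyp}) becomes the assertion that $h(f^{(n)})/n \ge \tfrac{1}{2}\log(2\pi e) - c$ for every isotropic log-concave density $f^{(n)}$.

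Next, I would introduce the isotropic constant $L_{f^{(n)}} := \bigl(\sup_{x^{n}} f^{(n)}(x^{n})\bigr)^{1/n}$ and invoke the standard entropy-vs-maximum comparison for log-concave densities (a consequence of Fradelizi's lemma on one-dimensional marginals together with comparison to a product exponential): there is a universal $c_0$ with $h(f^{(n)})/n \ge -\log L_{f^{(n)}} - c_0$. Combined with the previous step, the conjecture therefore reduces to a dimension-free upper bound $L_{f^{(n)}} \le C$, which is exactly Bourgain's original slicing formulation.

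The last reduction is the main obstacle and is the core unsolved problem. The route I would pursue is Eldan's \emph{stochastic localization}: replace $f^{(n)}$ by a martingale of tilted log-concave densities whose covariance evolves under a matrix SDE, and control the growth of the maximum along the flow via thin-shell concentration estimates for the tilts. Present techniques along these lines yield only $L_{f^{(n)}} \le C\cdot\mathrm{polylog}(n)$, equivalently $D(f^{(n)})/n \le c\log\log n$; shaving off the polylog is where the real difficulty sits, and it is essentially equivalent to proving the KLS conjecture. A complementary avenue I would explore in parallel bypasses the isotropic constant altogether and works directly with the Ornstein--Uhlenbeck semigroup acting on $f^{(n)}$, using de Bruijn's identity to write $D(f^{(n)})$ as an integral of Fisher-information gaps along the flow and exploiting preservation of log-concavity; but bounding this integral dimension-freely ultimately again appears to require spectral-gap control of the very same strength not currently available.
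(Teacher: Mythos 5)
The statement you are trying to prove is labelled a \emph{Conjecture} in the paper, quoted from Bobkov and Madiman, and the paper offers no proof of it; indeed the author explicitly states that the methods of the paper do \emph{not} yield a proof of the hyperplane conjecture. Your proposal does not close the gap either, and you essentially say so yourself. The chain of reductions you describe is correct and standard: affine invariance of $D(f^{(n)})$ lets you pass to isotropic position, the entropy-versus-maximum comparison for log-concave densities (the $\frac{1}{n}h(f^{(n)}) \ge -\log L_{f^{(n)}} - c_0$ step, which is \cite[Theorem~I.1]{BobkovMadiman11}) converts the entropic inequality into a dimension-free bound on the isotropic constant, and that bound is precisely Bourgain's slicing problem. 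But this is a reduction of the conjecture to an equivalent open formulation, not a proof: the final step ($L_{f^{(n)}}\le C$ universally) is the entire content of the problem, and the stochastic-localization and Ornstein--Uhlenbeck/de Bruijn routes you sketch are, as you note, only known to deliver polylogarithmic-in-$n$ bounds. A proof attempt whose decisive step is ``and then resolve the open problem'' has a genuine, named gap at exactly that step.

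It is worth contrasting your route with what the paper actually does in the vicinity of this conjecture. Rather than attacking the universal constant, Theorem~\ref{TheoremPartialHyperPlane} retreats to a strictly weaker statement: it assumes in addition that the sequence $f^{(n)}$ is $(c_1,c_2)$-regular with uniformly bounded second moments, and it obtains a constant $c$ that depends on the particular sequence (through $c_1$, $c_2$, and the second moments) rather than a universal one. The proof there bypasses convex geometry entirely, using the Polyanskiy--Wu coupling bound (Proposition~\ref{Propo2}) to control $D(f^{(n)})$ by a Wasserstein distance and then by second moments. Your approach aims at the full conjecture and correctly locates where the difficulty lives; the paper's approach trades universality for an unconditional, elementary bound on a restricted class. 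Neither proves Conjecture~\ref{SlicingProb_Entrop}.
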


As stated in \cite{BobkovMadiman11} Conjecture~\ref{SlicingProb_Entrop} gives a formulation of the hyperplane conjecture as a statement about the (dimension-free) closeness of a log-concave measure to a Gaussian measure.

While the proof of the hyperplane conjecture is in general open, partial results are known, see, e.g., \cite{Bourgain86,MilmanPajor,klartag2009hyperplane} and references therein. 

Obviously, Conjecture~\ref{SlicingProb_Entrop} and Theorem~\ref{TheoremBoundStationary} have some similarities as they both upper-bound $D(f^{(n)})/n$. However, there are also significant differences. We will discuss the relation between both in Section~\ref{Section_Hyperplane}.

\section{Lower-Bounding of Entropy Rates}\label{Section_LowerBound}
In the following, we present a proof of Theorem~\ref{TheoremBoundStationary} based on a result by Polyanskiy and Wu given in \cite[Proposition~1]{PolyanskiyWu2015}. Before stating this proposition we recall the definition of the Wasserstein distance as given in \cite{PolyanskiyWu2015}.

The Wasserstein distance on the Euclidean space is defined as follows. Let $\mathsf{X}$ and $\mathsf{Y}$ be random vectors in $\mathbb{R}^{n}$. Given probability measures $\mu,\nu$ on $\mathbb{R}^{n}$, their $p$-Wasserstein distance ($p\ge 1$) is given by
\begin{IEEEeqnarray}{rCL}
W_p(\mu,\nu)&=&\inf\left(\mathrm{E}[\Vert \mathsf{X}-\mathsf{Y}\Vert^{p}]\right)^{1/p}\label{DefWasserstein}
\end{IEEEeqnarray}
where the infimum is taken over all couplings of $\mu$ and $\nu$, i.e., joint distributions $P_{\mathsf{XY}}$ whose marginals satisfy $P_{\mathsf{X}}=\mu$ and $P_{\mathsf{Y}}=\nu$.

Now we are ready to state the proposition given by Polyanskiy and Wu.

\begin{Propos}\cite[Prop.~1]{PolyanskiyWu2015}\label{Propo2}
Let $\mathsf{U}$ and $\mathsf{V}$ be random vectors with finite second moments. If $\mathsf{V}$ has a $(c_1,c_2)$-regular density $p_{\mathsf{V}}$, then there exists a coupling $P_{\mathsf{UV}}$, such that\footnote{Note, on the LHS of (\ref{Prop2MainIneq}) the expectation is taken with respect to the random variable $(\mathsf{U},\mathsf{V})$, i.e., 
\begin{IEEEeqnarray}{rCL}
\mathrm{E}\left[\left|\log\frac{p_{\mathsf{V}}(\mathsf{V})}{p_{\mathsf{V}}(\mathsf{U})}\right|\right]
&=&\int\int p_{\mathsf{U}\mathsf{V}}(u,v)\left|\log\frac{p_{\mathsf{V}}(v)}{p_{\mathsf{V}}(u)}\right|dudv\nonumber
\end{IEEEeqnarray}
where $p_{\mathsf{U}\mathsf{V}}$ is the joint density of $(\mathsf{U},\mathsf{V})$.}
\begin{IEEEeqnarray}{rCL}
\mathrm{E}\left[\left|\log\frac{p_{\mathsf{V}}(\mathsf{V})}{p_{\mathsf{V}}(\mathsf{U})}\right|\right]&\le& \Delta\label{Prop2MainIneq}
\end{IEEEeqnarray}
where 
\begin{IEEEeqnarray}{rCL}
\Delta&=&\left(\frac{c_1}{2}\sqrt{\mathrm{E}\left[\left\Vert\mathsf{U}\right\Vert^{2}\right]}+\frac{c_1}{2}\sqrt{\mathrm{E}\left[\left\Vert\mathsf{V}\right\Vert^{2}\right]}+c_2\right)W_2(P_{\mathsf{U}},P_{\mathsf{V}}).\nonumber\\\label{DefDelta}
\end{IEEEeqnarray}
Consequently,
\begin{IEEEeqnarray}{rCL}
h(\mathsf{U})-h(\mathsf{V})&\le&\Delta.\label{UsedInequality}
\end{IEEEeqnarray}
If both $\mathsf{U}$ and $\mathsf{V}$ are $(c_1,c_2)$-regular, then
\begin{IEEEeqnarray}{rCL}
|h(\mathsf{U})-h(\mathsf{V})|&\le& \Delta\\
D(P_{\mathsf{U}}\Vert P_{\mathsf{V}})+D(P_{\mathsf{V}}\Vert P_{\mathsf{U}})&\le& 2\Delta.
\end{IEEEeqnarray}
\end{Propos}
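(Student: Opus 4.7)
The plan is to establish the pointwise inequality
\begin{IEEEeqnarray}{rCL}
\left|\log\frac{p_{\mathsf{V}}(v)}{p_{\mathsf{V}}(u)}\right|&\le&\left(\frac{c_{1}}{2}\Vert u\Vert+\frac{c_{1}}{2}\Vert v\Vert+c_{2}\right)\Vert v-u\Vert\nonumber
\end{IEEEeqnarray}
first, then take expectation under a $W_{2}$-optimal coupling to obtain~(\ref{Prop2MainIneq}), and finally derive the entropy and Kullback--Leibler consequences from standard identities.

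For the pointwise bound I would invoke the fundamental theorem of calculus along the segment from $u$ to $v$, writing $\log p_{\mathsf{V}}(v)-\log p_{\mathsf{V}}(u)=\int_{0}^{1}\langle\nabla\log p_{\mathsf{V}}(u+t(v-u)),\,v-u\rangle\,dt$. Estimating the integrand by Cauchy--Schwarz in $\mathbb{R}^{n}$, applying the regularity condition~(\ref{RegularityCond}) to $\Vert\nabla\log p_{\mathsf{V}}\Vert$, using the triangle inequality $\Vert u+t(v-u)\Vert\le(1-t)\Vert u\Vert+t\Vert v\Vert$, and integrating in $t$ then yields the displayed bound.

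Next, I would fix an optimal coupling $P_{\mathsf{UV}}$ for $W_{2}(P_{\mathsf{U}},P_{\mathsf{V}})$, whose existence is standard under the finite second moment hypothesis. Taking expectation of the pointwise inequality and applying Cauchy--Schwarz \emph{term by term}---so that $\mathrm{E}[\Vert\mathsf{U}\Vert\,\Vert\mathsf{V}-\mathsf{U}\Vert]\le\sqrt{\mathrm{E}\Vert\mathsf{U}\Vert^{2}}\,W_{2}(P_{\mathsf{U}},P_{\mathsf{V}})$ and analogously for the $\Vert\mathsf{V}\Vert$ and constant summands---gives~(\ref{Prop2MainIneq}) with $\Delta$ of exactly the form~(\ref{DefDelta}). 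Splitting the three summands before Cauchy--Schwarz is essential; one joint application would collapse them into a single square root and produce the strictly weaker estimate $\sqrt{\mathrm{E}[(c_{1}\Vert\mathsf{U}\Vert/2+c_{1}\Vert\mathsf{V}\Vert/2+c_{2})^{2}]}\,W_{2}$.

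For the entropy inequality~(\ref{UsedInequality}) I would use the identities $-h(\mathsf{V})=\mathrm{E}[\log p_{\mathsf{V}}(\mathsf{V})]$ and $D(P_{\mathsf{U}}\Vert P_{\mathsf{V}})=-h(\mathsf{U})-\mathrm{E}[\log p_{\mathsf{V}}(\mathsf{U})]$ to write $h(\mathsf{U})-h(\mathsf{V})=\mathrm{E}[\log(p_{\mathsf{V}}(\mathsf{V})/p_{\mathsf{V}}(\mathsf{U}))]-D(P_{\mathsf{U}}\Vert P_{\mathsf{V}})$, then drop the nonpositive $-D$ term and bound by the absolute value to invoke~(\ref{Prop2MainIneq}). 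When both $\mathsf{U}$ and $\mathsf{V}$ are regular, the reverse direction follows by swapping roles, and decomposing $D(P_{\mathsf{U}}\Vert P_{\mathsf{V}})+D(P_{\mathsf{V}}\Vert P_{\mathsf{U}})=\mathrm{E}[\log(p_{\mathsf{V}}(\mathsf{V})/p_{\mathsf{V}}(\mathsf{U}))]+\mathrm{E}[\log(p_{\mathsf{U}}(\mathsf{U})/p_{\mathsf{U}}(\mathsf{V}))]$ under a common coupling bounds each term by $\Delta$. The main technical care is the term-by-term Cauchy--Schwarz split; differentiability of $\log p_{\mathsf{V}}$ along the segment, existence of a $W_{2}$-optimum, and the integrability needed to pass the modulus inside the expectation are all routine under $(c_{1},c_{2})$-regularity and finite second moments.
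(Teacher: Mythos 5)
This proposition is imported verbatim from \cite{PolyanskiyWu2015}; the paper you are reading offers no proof of it, so your reconstruction can only be compared against the original source, and it does in fact follow the same route as Polyanskiy and Wu: the fundamental theorem of calculus along the segment from $u$ to $v$, Cauchy--Schwarz on the inner product, the regularity bound (\ref{RegularityCond}) with $\Vert u+t(v-u)\Vert\le(1-t)\Vert u\Vert+t\Vert v\Vert$, integration in $t$ to produce the factors $\tfrac{1}{2}$, and then expectation under a $W_2$-optimal coupling. The entropy and divergence consequences via $h(\mathsf{U})-h(\mathsf{V})=\mathrm{E}[\log(p_{\mathsf{V}}(\mathsf{V})/p_{\mathsf{V}}(\mathsf{U}))]-D(P_{\mathsf{U}}\Vert P_{\mathsf{V}})$ and role-swapping are also the standard ones. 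The argument is correct.

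One remark in your commentary is backwards, though it does not damage the proof. You claim the term-by-term Cauchy--Schwarz split is \emph{essential} because a single joint application would yield the ``strictly weaker'' estimate $\bigl(\mathrm{E}[(\tfrac{c_1}{2}\Vert\mathsf{U}\Vert+\tfrac{c_1}{2}\Vert\mathsf{V}\Vert+c_2)^2]\bigr)^{1/2}W_2$. In fact that quantity is \emph{smaller} than the prefactor in (\ref{DefDelta}): by the triangle inequality for the $L^2$ norm (Minkowski),
\begin{IEEEeqnarray}{rCL}
\sqrt{\mathrm{E}\Bigl[\Bigl(\tfrac{c_1}{2}\Vert\mathsf{U}\Vert+\tfrac{c_1}{2}\Vert\mathsf{V}\Vert+c_2\Bigr)^{2}\Bigr]}&\le&\tfrac{c_1}{2}\sqrt{\mathrm{E}[\Vert\mathsf{U}\Vert^{2}]}+\tfrac{c_1}{2}\sqrt{\mathrm{E}[\Vert\mathsf{V}\Vert^{2}]}+c_2,\nonumber
\end{IEEEeqnarray}
so the joint application gives a tighter intermediate bound from which (\ref{DefDelta}) still follows. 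The split is a convenience, not a necessity. The remaining technical points you defer---differentiability of $\log p_{\mathsf{V}}$ along segments (implicit in the statement of $(c_1,c_2)$-regularity), existence of the optimal coupling, and integrability of the logarithmic ratio---are indeed routine under the stated hypotheses.
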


Based on Proposition~\ref{Propo2} we now prove Theorem~\ref{TheoremBoundStationary}.

\begin{proof}[Proof of Theorem~\ref{TheoremBoundStationary}]
We assume that the sequence of densities $f^{(n)}$ is $(c_1,c_2)$-regular. Hence, there exist a finite $c_1>0$ and a finite $c_2\ge 0$ such that (\ref{RegularityCond}) holds for all $n\in \mathbb{N}$. We further assume that $g^{(n)}$ is a Gaussian density having the same mean and the same covariance matrix as $f^{(n)}$. As $g^{(n)}$ is Gaussian the Kullback-Leibler divergence between $f^{(n)}$ and $g^{(n)}$ can be expressed as a difference of differential entropies, see (\ref{KL_GaussEntropDiff})
\begin{IEEEeqnarray}{rCL}
D(f^{(n)})=D(f^{(n)}\Vert g^{(n)})&=&h(g^{(n)})-h(f^{(n)})\nonumber\\
&=&h(\mathsf{X}_{\textrm{Gauss}}^{n})-h(\mathsf{X}^{n})
\end{IEEEeqnarray}
where $\mathsf{X}^{n}\sim f^{(n)}$ and $\mathsf{X}_{\textrm{Gauss}}^{n}\sim g^{(n)}$. The Kullback-Leibler divergence is always nonnegative \cite[Theorem~8.6.1]{CoverBook2}.  

Identifying $\mathsf{U}$ and $\mathsf{V}$ in Prop.~\ref{Propo2} with $\mathsf{X}_{\textrm{Gauss}}^{n}$ and $\mathsf{X}^{n}$, respectively, with (\ref{UsedInequality}) and (\ref{DefDelta}) $D(f^{(n)})$ is  upper-bounded by
\begin{IEEEeqnarray}{rCL}
D(f^{(n)})&\le& \left(\frac{c_1}{2}\sqrt{\mathrm{E}\left[\left\Vert\mathsf{X}_{\textrm{Gauss}}^{n}\right\Vert^{2}\right]}+\frac{c_1}{2}\sqrt{\mathrm{E}\left[\left\Vert\mathsf{X}^{n}\right\Vert^{2}\right]}+c_2\right)\nonumber\\
&&\times W_2(g^{(n)},f^{(n)})\nonumber\\
&=&\left(c_1\sqrt{\mathrm{E}\left[\left\Vert\mathsf{X}^{n}\right\Vert^{2}\right]}+c_2\right)\inf\sqrt{\mathrm{E}\left[\Vert\mathsf{X}_{\textrm{Gauss}}^{n}-\mathsf{X}^{n}\Vert^{2}\right]}\nonumber\\
\label{ProofLast-2}
\end{IEEEeqnarray}
where for (\ref{ProofLast-2}) we have substituted the Wasserstein distance $W_2(g^{(n)},f^{(n)})$ using its definition in (\ref{DefWasserstein}). The infimum is taken over all couplings of $g^{(n)}$ and $f^{(n)}$. Note, as we assume that the densities $f^{(n)}$ and $g^{(n)}$ exist, we have substituted the distributions in the argument of Wasserstein distance in (\ref{DefDelta}) by the corresponding densities. Moreover, we have used that $\mathsf{X}^{n}$ and $\mathsf{X}_{\textrm{Gauss}}^{n}$ have the same mean and the same covariance. Using the triangle inequality we can further upper-bound the RHS of (\ref{ProofLast-2}) yielding
\begin{IEEEeqnarray}{rCL}
&&D(f^{(n)})\nonumber\\
&&\le\left(c_1\sqrt{\mathrm{E}\left[\left\Vert\mathsf{X}^{n}\right\Vert^{2}\right]}+c_2\right)\inf\sqrt{\mathrm{E}\!\left[(\Vert\mathsf{X}_{\textrm{Gauss}}^{n}\Vert\!+\!\Vert\mathsf{X}^{n}\Vert)^{2}\right]}\nonumber\\ 
&&=\left(c_1\sqrt{\mathrm{E}\left[\left\Vert\mathsf{X}^{n}\right\Vert^{2}\right]}+c_2\right)\nonumber\\
&&\quad\times\inf\sqrt{\mathrm{E}\left[\Vert\mathsf{X}_{\textrm{Gauss}}^{n}\Vert^{2}+2\Vert\mathsf{X}_{\textrm{Gauss}}^{n}\Vert\Vert\mathsf{X}^{n}\Vert+\Vert\mathsf{X}^{n}\Vert^{2}\right]}\nonumber\\
&&\le\left(c_1\sqrt{\mathrm{E}\left[\left\Vert\mathsf{X}^{n}\right\Vert^{2}\right]}+c_2\right)\sqrt{4\mathrm{E}\left[\Vert\mathsf{X}^{n}\Vert^{2}\right]}\label{ProofLast-1}
\end{IEEEeqnarray}
where for (\ref{ProofLast-1}) we have applied the Cauchy-Schwarz inequality and the equality of the second moments of $\mathsf{X}^{n}$ and $\mathsf{X}_{\textrm{Gauss}}^{n}$. With (\ref{ProofLast-1}) we get
\begin{IEEEeqnarray}{rCL}
\frac{D(f^{(n)})}{n}&\le&\frac{2c_1 \mathrm{E}\left[\Vert\mathsf{X}^{n}\Vert^{2}\right]}{n}+\frac{2c_2 \sqrt{\mathrm{E}\left[\Vert\mathsf{X}^{n}\Vert^{2}\right]}}{n}\nonumber\\
&=&2c_1 \mathrm{E}\left[\Vert\mathsf{X}_k^{n}\Vert^{2}\right]+\frac{2c_2 \sqrt{\mathrm{E}\left[\Vert\mathsf{X}_k^{n}\Vert^{2}\right]}}{\sqrt{n}}\label{LastIneqArray1}\\
&\le&c<\infty \qquad\forall n\label{LastIneqArray2}
\end{IEEEeqnarray}
where in (\ref{LastIneqArray1}) $\mathsf{X}_k^{n}$ is the $k$-th element of the vector $\mathsf{X}^{n}$ and where we have used that $\{\mathsf{X}\}$ is stationary. Moreover, (\ref{LastIneqArray2}) holds as the second moments of $\mathsf{X}^{n}\sim f^{(n)}$ as well as $c_1$ and $c_2$ are finite and with $c$ being a uniform constant independent of $n$, which completes the proof.
\end{proof}

Theorem~\ref{TheoremBoundStationary} states that for a stationary $(c_1,c_2)$-regular process $\{\mathsf{X}\}$ the differential entropy rate is upper-bounded by the differential entropy rate of a Gaussian process with the same autocovariance function, such that $\{h(\mathsf{X}_{\textrm{Gauss}}^{n})-h(\mathsf{X}^{n})\}/n$ is not diverging for $n\rightarrow\infty$. For $n\rightarrow\infty$ we get with (\ref{LastIneqArray1}), cf. (\ref{NewLimDiffKL})\looseness-1
\begin{IEEEeqnarray}{rCL}
h'(\{\mathsf{X}_{\textrm{Gauss}}\})-h'(\{\mathsf{X}\})
&\le& 2c_1\mathrm{E}[\Vert\mathsf{X}^n_k\Vert^2]=2c_1(\mu^2+\sigma^2)\qquad
\end{IEEEeqnarray}
with $\mu$ and $\sigma^2$ being the mean and the variance of the elements of the process $\{\mathsf{X}\}$.\footnote{This holds in case $c_1$ and $c_2$ are independent of $n$, which is fulfilled by our definition of a $(c_1,c_2)$-regular process.} Differently stated, with (\ref{EntRateSpec}) it follows that the differential entropy rate $h'(\{\mathsf{X}\})$ is lower-bounded by
\begin{IEEEeqnarray}{rCL}
h'(\{\mathsf{X}\})&\ge \frac{1}{2} \int_{-\frac{1}{2}}^{\frac{1}{2}}\log(2\pi e S_{\mathsf{X}}(f))df-2c_1(\mu^2+\sigma^2).\ \label{LowerBoundRate}
\end{IEEEeqnarray}
As $h'(\{\mathsf{X}\})$ does not depend on the mean value of $\{\mathsf{X}\}$, we can set $\mu=0$.

\subsection{Example}
We consider the stationary random process $\{\mathsf{Y}\}$. Let $\mathsf{Y}^{n}$ be a vector containing a sequence of length $n$ of $\{\mathsf{Y}\}$. It is given by\looseness-1
\begin{IEEEeqnarray}{rCL}
\mathsf{Y}^{n}&=&\mathsf{H}^{n}\odot\mathsf{X}^{n}+\mathsf{Z}^{n}\label{ExampleModel}
\end{IEEEeqnarray}
where $\odot$ denotes the Hadamard product, i.e., the element wise product. Moreover, $\mathsf{H}^{n}$ and $\mathsf{X}^{n}$ are vectors containing sequences of length $n$ of the two zero-mean stationary Gaussian processes $\{\mathsf{H}\}$ and $\{\mathsf{X}\}$, respectively. These processes have the autocorrelation functions $r_{\mathsf{H}}(l)$ and $r_{\mathsf{X}}(l)$, cf. (\ref{DefAutocorr}). The corresponding power spectral densities are denoted by $S_{\mathsf{H}}(f)$ and $S_{\mathsf{X}}(f)$, cf. (\ref{Def_PSD}). Finally, the entries of the vector $\mathsf{Z}^{n}$ correspond to a sequence of length $n$ of the i.i.d.\ zero-mean Gaussian process $\{\mathsf{Z}\}$ and have variance $\sigma_{\mathsf{Z}}^{2}$. The processes $\{\mathsf{H}\}$, $\{\mathsf{X}\}$, and $\{\mathsf{Z}\}$ are mutually independent. The mean and the variance of the elements of the process $\{\mathsf{Y}\}$ are given by $\mu=0$ and $\sigma^{2}=r_{\mathsf{H}}(0)r_{\mathsf{X}}(0)+\sigma_{\mathsf{Z}}^{2}$. Moreover, the power spectral density $S_{\mathsf{Y}}(f)$ is given by
\begin{IEEEeqnarray}{rCL}
S_{\mathsf{Y}}(f)&=&(S_{\mathsf{H}}\star S_{\mathsf{X}})(f)+\sigma_{\mathsf{Z}}^{2},  \quad -\frac{1}{2}<f < \frac{1}{2}\label{PSD_YExamp}
\end{IEEEeqnarray}
where $\star$ denotes convolution.

The vector $\mathsf{Y}^{n}$ is not Gaussian distributed and no closed form for the probability density function is available. Moreover, no closed form solution for the differential entropy rate of the process $\{\mathsf{Y}\}$ is available. Thus, our aim is to lower-bound the differential entropy rate of $\{\mathsf{Y}\}$ using the approach described above. For this purpose we use the following proposition given by Polyanskiy and Wu.

\begin{Propos}\cite[Prop.~2]{PolyanskiyWu2015}\label{PropoRegConv}
Let $\mathsf{Y}^{n}=\mathsf{B}^{n}+\mathsf{Z}^{n}$ where $\mathsf{B}^{n}$ is independent of $\mathsf{Z}^{n}\sim\mathcal{N}(0,\sigma_{\mathsf{Z}}^{2}\mathbf{I}_{n})$, with $\mathbf{I}_{n}$ being the identity matrix of size $n\times n$. Then the density of $\mathsf{Y}^{n}$ is $(c_1,c_2)$-regular with $c_1=\frac{3\log e}{\sigma_{\mathsf{Z}}^{2}}$ and $c_2=\frac{4\log e}{\sigma_{\mathsf{Z}}^{2}}\mathrm{E}[\Vert\mathsf{B}^{n}\Vert]$. 
\end{Propos}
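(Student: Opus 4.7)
\emph{Plan.} The key structural fact is that $p_{\mathsf{Y}^{n}}$ is a Gaussian smoothing of $p_{\mathsf{B}^{n}}$: writing $\phi_{\sigma_{\mathsf{Z}}^{2}}$ for the density of $\mathcal{N}(0,\sigma_{\mathsf{Z}}^{2}\mathbf{I}_{n})$, we have $p_{\mathsf{Y}^{n}}(y)=\mathrm{E}[\phi_{\sigma_{\mathsf{Z}}^{2}}(y-\mathsf{B}^{n})]$. Using the Gaussian score identity $\nabla\phi_{\sigma_{\mathsf{Z}}^{2}}(z)=-(z/\sigma_{\mathsf{Z}}^{2})\phi_{\sigma_{\mathsf{Z}}^{2}}(z)$, differentiating under the expectation, and dividing by $p_{\mathsf{Y}^{n}}(y)$ yields the Tweedie--Robbins identity
\[
\nabla\ln p_{\mathsf{Y}^{n}}(y)=-\frac{1}{\sigma_{\mathsf{Z}}^{2}}\bigl(y-\mathrm{E}[\mathsf{B}^{n}\,|\,\mathsf{Y}^{n}=y]\bigr),
\]
so that by the triangle inequality and Jensen's inequality applied to the norm of the conditional expectation
\[
\sigma_{\mathsf{Z}}^{2}\|\nabla\ln p_{\mathsf{Y}^{n}}(y)\|\le\|y\|+\mathrm{E}\bigl[\|\mathsf{B}^{n}\|\,\big|\,\mathsf{Y}^{n}=y\bigr].
\]
Every bound will finally be multiplied by $\log e$ to pass from natural logarithm to the arbitrary log base used in the paper, which accounts for the $\log e$ factors appearing in $c_{1}$ and $c_{2}$.

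The entire argument thus reduces to the pointwise posterior estimate
\[
\mathrm{E}\bigl[\|\mathsf{B}^{n}\|\,\big|\,\mathsf{Y}^{n}=y\bigr]\le 2\|y\|+4\,\mathrm{E}[\|\mathsf{B}^{n}\|],\qquad(\star)
\]
which combined with the previous display gives $\sigma_{\mathsf{Z}}^{2}\|\nabla\ln p_{\mathsf{Y}^{n}}(y)\|\le 3\|y\|+4\mathrm{E}[\|\mathsf{B}^{n}\|]$ and, after the base change, the advertised $c_{1}=3\log e/\sigma_{\mathsf{Z}}^{2}$ and $c_{2}=4\log e\,\mathrm{E}[\|\mathsf{B}^{n}\|]/\sigma_{\mathsf{Z}}^{2}$. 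To prove $(\star)$, I would split the numerator $\int\|b\|\phi_{\sigma_{\mathsf{Z}}^{2}}(y-b)p_{\mathsf{B}^{n}}(b)\,db$ at the threshold $\|b\|=2\|y\|$. The near region contributes at most $2\|y\|\,p_{\mathsf{Y}^{n}}(y)$ immediately, while on the far region $\|b\|>2\|y\|$ the triangle inequality forces $\|y-b\|\ge\|b\|/2$ and thus the sub-Gaussian decay $\phi_{\sigma_{\mathsf{Z}}^{2}}(y-b)\le\phi_{\sigma_{\mathsf{Z}}^{2}}(0)\exp(-\|b\|^{2}/(8\sigma_{\mathsf{Z}}^{2}))$, which is to be balanced against the Markov-type estimate $\mathbb{P}(\|\mathsf{B}^{n}\|>2\mathrm{E}[\|\mathsf{B}^{n}\|])\le 1/2$ on the prior tail.

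The \textbf{main obstacle} is to carry out this far-region estimate in a dimension-free way so that the tail integral is controlled by $4\mathrm{E}[\|\mathsf{B}^{n}\|]\,p_{\mathsf{Y}^{n}}(y)$. A direct Gaussian peak estimate introduces the factor $\phi_{\sigma_{\mathsf{Z}}^{2}}(0)=(2\pi\sigma_{\mathsf{Z}}^{2})^{-n/2}$, while the normalizer $p_{\mathsf{Y}^{n}}(y)$ carries no such exponential-in-$n$ prefactor in general, so any naive lower bound on $p_{\mathsf{Y}^{n}}(y)$ obtained by restricting the prior to a small ball is far too weak to cancel this dimensional factor. The Polyanskiy--Wu proof resolves this tension by comparing the tail integral directly against the biased posterior measure $\phi_{\sigma_{\mathsf{Z}}^{2}}(y-b)p_{\mathsf{B}^{n}}(b)/p_{\mathsf{Y}^{n}}(y)$ rather than against the unweighted Gaussian, which effectively moves the normalizer into the numerator and removes the offending $n$-dependent prefactor. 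Once $(\star)$ is in place the remainder of the argument is a formality, and Proposition~\ref{PropoRegConv} follows.
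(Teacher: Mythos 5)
First, a point of reference: the paper does not prove this proposition at all --- it is quoted verbatim from Polyanskiy and Wu \cite{PolyanskiyWu2015} --- so your attempt has to be judged against that source rather than against anything in the paper. Your first stage is correct and is exactly the right reduction: Tweedie's identity $\nabla\ln p_{\mathsf{Y}^{n}}(y)=-\sigma_{\mathsf{Z}}^{-2}\bigl(y-\mathrm{E}[\mathsf{B}^{n}\,|\,\mathsf{Y}^{n}=y]\bigr)$, the triangle and Jensen inequalities, and the base-change factor $\log e$ correctly reduce the proposition to the posterior estimate $(\star)$, and $(\star)$ is indeed the inequality that produces the constants $3$ and $4$.

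The genuine gap is that $(\star)$ --- the entire substance of the result --- is not established: you explicitly flag the far-region estimate as the ``main obstacle'' and resolve it only by a vague appeal to ``comparing against the biased posterior measure.'' Worse, your diagnosis of the obstacle is backwards. The ``naive'' lower bound on $p_{\mathsf{Y}^{n}}(y)$ obtained by restricting the prior to a ball is precisely what works, because the dimensional prefactor $(2\pi\sigma_{\mathsf{Z}}^{2})^{-n/2}$ appears in both the numerator and the denominator of the posterior expectation and cancels exactly. Concretely, set $m=\mathrm{E}[\Vert\mathsf{B}^{n}\Vert]$ and split at the threshold $a=2\Vert y\Vert+2m$, not at $2\Vert y\Vert$ as you propose. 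Markov's inequality gives $\Pr[\Vert\mathsf{B}^{n}\Vert\le 2m]\ge\tfrac{1}{2}$, and on that ball $\Vert y-b\Vert\le\Vert y\Vert+2m$, hence
\[
p_{\mathsf{Y}^{n}}(y)\;\ge\;\tfrac{1}{2}\,(2\pi\sigma_{\mathsf{Z}}^{2})^{-n/2}\,e^{-(\Vert y\Vert+2m)^{2}/(2\sigma_{\mathsf{Z}}^{2})}.
\]
On the far region $\Vert b\Vert>a$ one has $\Vert y-b\Vert\ge\Vert b\Vert-\Vert y\Vert\ge\Vert y\Vert+2m$, so the ratio of the Gaussian kernel to this lower bound is at most $2\exp\bigl(((\Vert y\Vert+2m)^{2}-(\Vert b\Vert-\Vert y\Vert)^{2})/(2\sigma_{\mathsf{Z}}^{2})\bigr)\le 2$ \emph{pointwise}, giving $\mathrm{E}[\Vert\mathsf{B}^{n}\Vert\mathbbm{1}\{\Vert\mathsf{B}^{n}\Vert>a\}\,|\,\mathsf{Y}^{n}=y]\le 2m$ and therefore $\mathrm{E}[\Vert\mathsf{B}^{n}\Vert\,|\,\mathsf{Y}^{n}=y]\le a+2m=2\Vert y\Vert+4m$, which is $(\star)$. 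With your threshold $2\Vert y\Vert$ the exponents do not compare pointwise (take $\Vert b\Vert$ only slightly above $2\Vert y\Vert$ with $m$ large and the surviving exponential factor blows up), so the split you describe does not close; the missing idea is simply the correct, $m$-dependent choice of threshold, after which no ``biased posterior'' device is needed.
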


Using Proposition~\ref{PropoRegConv} it follows that $\mathsf{Y}^{n}$ in (\ref{ExampleModel}) is $(c_1,c_2)$-regular with $c_1=\frac{3\log e}{\sigma_{\mathsf{Z}^{2}}}$ and $c_2=\frac{4\log e}{\sigma_{\mathsf{Z}}^{2}} \sqrt{n r_{\mathsf{H}}(0)r_{\mathsf{X}}(0)}\ge \frac{4\log e}{\sigma_{\mathsf{Z}}^{2}}\mathrm{E}[\Vert \mathsf{H}^{n}\odot\mathsf{X}^{n} \Vert]$. However, as $c_2$ depends on $n$ we cannot directly apply (\ref{LowerBoundRate}) but still have to consider the second term on the RHS of (\ref{LastIneqArray1}) for bounding, yielding
\begin{IEEEeqnarray}{rCL}
h'(\{\mathsf{Y}\})&\ge&\frac{1}{2}\int_{-\frac{1}{2}}^{\frac{1}{2}}\!\log(2\pi e S_{\mathsf{Y}}(f))df-2c_1\sigma^2-\!\lim_{n\rightarrow\infty}\!\frac{2c_2 \sqrt{\sigma^2}}{\sqrt{n}}.\nonumber\\\label{ExtBound}
\end{IEEEeqnarray}
Introducing $c_1, c_2$, $\sigma^2$, and $S_{\mathsf{Y}}(f)$ yields
\begin{IEEEeqnarray}{rCL}
\hspace{-0.4cm}h'(\{\mathsf{Y}\})&\ge& \frac{1}{2} \int_{-\frac{1}{2}}^{\frac{1}{2}}\log(2\pi e ((S_{\mathsf{H}}\star S_{\mathsf{X}})(f)+\sigma_{\mathsf{Z}}^{2}))df\nonumber\\&&-6\log e \left(\frac{r_{\mathsf{H}}(0)r_{\mathsf{X}}(0)}{\sigma_{\mathsf{Z}}^{2}}+1\right)\nonumber\\
&&-\frac{8\log e}{\sigma_{\mathsf{Z}}^{2}}\sqrt{r_{\mathsf{H}}(0)r_{\mathsf{X}}(0)} \sqrt{r_{\mathsf{H}}(0)r_{\mathsf{X}}(0)+\sigma_{\mathsf{Z}}^{2}}.\label{ExamLB1}
\end{IEEEeqnarray}
The larger $\sigma_{\mathsf{Z}}^{2}$ becomes the more Gaussian the process $\{\mathsf{Y}\}$ gets and the closer the lower bound on the RHS of (\ref{ExamLB1}) comes to the differential entropy rate of a Gaussian process with the same power spectral density  which is given by $\frac{1}{2} \int_{-\frac{1}{2}}^{\frac{1}{2}}\log(2\pi e ((S_{\mathsf{H}}\star S_{\mathsf{X}})(f)+\sigma_{\mathsf{Z}}^{2}))df$. For large $\sigma_{\mathsf{Z}}^{2}$ the difference between the lower bound and the actual value of $h'(\mathsf{Y})$ converges to $6~\textrm{nats}$.

\section{Relation to the Hyperplane Conjecture}\label{Section_Hyperplane}
The methods applied to prove Theorem~\ref{TheoremBoundStationary} allow also to state the following theorem.
\begin{theorem}\label{TheoremPartialHyperPlane}
For any sequence of log-concave probability density functions $f^{(n)}$ on $\mathbb{R}^{n}$ for which the second moments are uniformly bounded and which is $(c_1,c_2)$-regular it holds that\looseness-1
\begin{IEEEeqnarray}{rCL}
\frac{D(f^{(n)})}{n}&\le& c\label{MainIneqPartialConjHyp}
\end{IEEEeqnarray}
with $c$ being a uniform constant independent of $n$.
\end{theorem}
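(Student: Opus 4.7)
The plan is to follow the proof of Theorem~\ref{TheoremBoundStationary} almost verbatim, observing that the only place where stationarity was truly invoked was in collapsing $\mathrm{E}[\Vert\mathsf{X}^{n}\Vert^{2}]/n$ to the per-coordinate second moment $\mathrm{E}[\Vert\mathsf{X}_k^{n}\Vert^{2}]$ at step~(\ref{LastIneqArray1}). Once that collapse is replaced by a direct hypothesis on the growth of the second moments, the same chain of inequalities delivers the claim. Log-concavity, by contrast, is not needed anywhere in the argument; it appears in the statement only so that the conclusion sits squarely inside the framework of Conjecture~\ref{SlicingProb_Entrop}.

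First I would introduce the Gaussian surrogate: let $g^{(n)}$ be the Gaussian density on $\mathbb{R}^{n}$ with the same mean and covariance as $f^{(n)}$, set $\mathsf{X}^{n}\sim f^{(n)}$ and $\mathsf{X}_{\textrm{Gauss}}^{n}\sim g^{(n)}$, and recall from (\ref{KL_GaussEntropDiff}) that $D(f^{(n)})=h(\mathsf{X}_{\textrm{Gauss}}^{n})-h(\mathsf{X}^{n})$. Next I would apply Proposition~\ref{Propo2} with $\mathsf{U}=\mathsf{X}_{\textrm{Gauss}}^{n}$ and $\mathsf{V}=\mathsf{X}^{n}$; the $(c_1,c_2)$-regularity hypothesis on $f^{(n)}$ is exactly what licenses this step, and the inequality~(\ref{UsedInequality}) then yields $D(f^{(n)})\le\Delta$ with $\Delta$ as in (\ref{DefDelta}).

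The Wasserstein factor is handled precisely as in the proof of Theorem~\ref{TheoremBoundStationary}: using the triangle inequality inside the infimum defining $W_2(g^{(n)},f^{(n)})$ together with Cauchy--Schwarz and the equality $\mathrm{E}[\Vert\mathsf{X}_{\textrm{Gauss}}^{n}\Vert^{2}]=\mathrm{E}[\Vert\mathsf{X}^{n}\Vert^{2}]$ (both distributions have the same covariance), one reproduces (\ref{ProofLast-1}) and, after dividing by $n$, exactly line~(\ref{LastIneqArray1}) in the new form
\begin{IEEEeqnarray}{rCL}
\frac{D(f^{(n)})}{n}&\le&\frac{2c_1\,\mathrm{E}[\Vert\mathsf{X}^{n}\Vert^{2}]}{n}+\frac{2c_2\sqrt{\mathrm{E}[\Vert\mathsf{X}^{n}\Vert^{2}]}}{n}.\nonumber
\end{IEEEeqnarray}

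To close, I would invoke the uniform bound on second moments in its natural per-coordinate reading, $\sup_{n}\mathrm{E}[\Vert\mathsf{X}^{n}\Vert^{2}]/n\le M<\infty$, which yields $\mathrm{E}[\Vert\mathsf{X}^{n}\Vert^{2}]\le Mn$ for all $n$; the first term is then bounded by $2c_1 M$ and the second by $2c_2\sqrt{M/n}\le 2c_2\sqrt{M}$, giving the desired uniform constant $c$. There is essentially no obstacle in the argument itself; the only point worth pausing on is the interpretation of the hypothesis (per-coordinate boundedness, which is the natural analog of what stationarity provided in Theorem~\ref{TheoremBoundStationary}), and the observation that log-concavity is decorative for this proof but essential for framing the result as a partial affirmation of Conjecture~\ref{SlicingProb_Entrop}.
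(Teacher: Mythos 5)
Your proposal is correct and follows essentially the same route as the paper: the author likewise observes that stationarity enters only at step~(\ref{LastIneqArray1}), reuses the entire chain through Proposition~\ref{Propo2}, the triangle inequality, and Cauchy--Schwarz, and then replaces the per-coordinate collapse by the uniform second-moment bound (stated in the paper as $\max_{k}\mathrm{E}[\Vert\mathsf{X}_k^{n}\Vert^{2}]$, which is equivalent to your $\mathrm{E}[\Vert\mathsf{X}^{n}\Vert^{2}]\le Mn$). Your side remark that log-concavity is not used in the argument and serves only to frame the result against Conjecture~\ref{SlicingProb_Entrop} also matches the paper's own discussion.
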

The difference between Theorem~\ref{TheoremBoundStationary} and Theorem~\ref{TheoremPartialHyperPlane} is that in the latter one we do not assume that the sequence of probability density functions $f^{(n)}$ arises from a stationary process. Instead, we make the assumption that the sequence of $f^{(n)}$ has a uniform bound for all second moments. In this regard, note that in the proof of Theorem~\ref{TheoremBoundStationary} the assumption on stationarity is only used in (\ref{LastIneqArray1}). However, with the constraint on uniformly bounded second moments it similarly to (\ref{LastIneqArray1}) holds that\looseness-1
\begin{IEEEeqnarray}{rCL}
\frac{1}{n}D(f^{(n)})&\le&2c_1 \max_{k\in\{1,\hdots,n\}}\mathrm{E}\left[\Vert\mathsf{X}_k^{n}\Vert^{2}\right]\nonumber\\
&&+\frac{2c_2 \sqrt{\max_{k\in\{1,\hdots,n\}}\mathrm{E}\left[\Vert\mathsf{X}_k^{n}\Vert^{2}\right]}}{\sqrt{n}}\label{LastIneqArray1_HyperPartial}\\
&\le&c<\infty \qquad\forall n.\label{LastIneqArray2_HyperPartial}
\end{IEEEeqnarray}
Hence, with this slight change of the proof of Theorem~\ref{TheoremBoundStationary} Theorem~\ref{TheoremPartialHyperPlane} follows. 

Although Theorem~\ref{TheoremPartialHyperPlane} looks to some extent similar to the entropic form of the hyperplane conjecture (Conjecture~\ref{SlicingProb_Entrop}) there are some significant differences. On the one hand, Theorem~\ref{TheoremPartialHyperPlane} is limited to sequences of log-concave probability density functions which have a uniform bound for all second moments and which are $(c_1,c_2)$-regular. On the other hand, the constant $c$ in Theorem~\ref{TheoremPartialHyperPlane} is a uniform constant independent of $n$, however, it depends on the individual sequence of $f^{(n)}$. Differently, the constant $c$ in Conjecture~\ref{SlicingProb_Entrop} is universal and holds for any log-concave density $f^{(n)}$ on $\mathbb{R}^{n}$. Thus, by the methods applied in the present paper we do not obtain a proof of the hyperplane conjecture.

The hyperplane conjecture and, thus, Theorem~\ref{TheoremPartialHyperPlane} are formulated for probability density functions $f^{(n)}$ which are log-concave. The question is, which log-concave density functions are  $(c_1,c_2)$-regular such that Theorem~\ref{TheoremPartialHyperPlane} can be applied? First of all note that a regular density fulfilling (\ref{RegularityCond}) has infinite support on $\mathbb{R}^{n}$, i.e., $f^{(n)}$ is never zero. However, even for $f^{(n)}$ having infinite support neither does log-concavity imply $(c_1,c_2)$-regularity, nor does $(c_1,c_2)$-regularity imply log-concavity. Hence, the question remains, which log-concave densities $f^{(n)}$ with infinite support are $(c_1,c_2)$-regular? 

If we assume that $f^{(n)}$ is differentiable on $\mathbb{R}^{n}$, for any finite $x^{n}$ we can find a $c_1$ and $c_2$ such that (\ref{RegularityCond}) is fulfilled. In order to prove this consider that the derivative of $f^{(n)}$ is finite on any closed set. What remains is to check the case $\Vert x^{n} \Vert \rightarrow \infty$. In this regard, note that (\ref{RegularityCond}) implies that
\begin{IEEEeqnarray}{rCL}
|\log f^{(n)}(x^{n})-\log f^{(n)}(0)|&\le&\frac{c_1}{2}\Vert x^{n}\Vert^2 +c_2 \Vert x^{n}\Vert .\label{RegularityConseq}
\end{IEEEeqnarray}
Without loss of generality we can assume that the maximum of $f^{(n)}$ is attained at $x^{n}=0$, as a shift of the density $f^{(n)}$ has no impact on its log-concavity. As log-concavity of $f^{(n)}$ implies that $f^{(n)}$ is unimodal, it holds that $f^{(n)}(x^{n})\le f^{(n)}(0)$. Hence, (\ref{RegularityConseq}) yields\footnote{Interpreting the $\log$ as the natural logarithm.}
\begin{IEEEeqnarray}{rCL}
f^{(n)}(x^{n})&\ge&f^{(n)}(0) e^{-\left(\frac{c_1}{2}\Vert x^{n}\Vert^2 +c_2 \Vert x^{n}\Vert\right)}.
\end{IEEEeqnarray}
On the other hand, log-concave densities can be bounded from above using the following lemma given in \cite[Lemma~1]{CuleSamwoth2010}.
\begin{lemma}\label{BoundedDensityatInfinity}
Let $f^{(n)}$ be a log-concave density on $\mathbb{R}^{n}$. There exist $a>0$ and $b\in \mathbb{R}$ such that $ f^{(n)}(x^{n})\le e^{-a\Vert x^{n}\Vert+b}$ for all $x^{n}\in \mathbb{R}^{n}$. 
\end{lemma}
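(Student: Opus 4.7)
The plan is to write the log-concave density as $f^{(n)}(x^n)=e^{-\phi(x^n)}$ with $\phi:\mathbb{R}^n\to\mathbb{R}\cup\{+\infty\}$ convex, and to establish a direction-uniform linear lower bound $\phi(x^n)\ge a\Vert x^n\Vert-b$; exponentiating yields the lemma. Points with $\phi=+\infty$ (where $f^{(n)}=0$) satisfy the claimed inequality trivially, so I would work on the effective domain $\mathrm{dom}(\phi)$, which is convex and has nonempty interior because $f^{(n)}$ integrates to one. Integrability further constrains $\phi$: each sublevel set $K_M=\{\phi\le M\}$ has Lebesgue measure at most $e^{M}$, and being convex and of finite measure it is bounded. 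Consequently $\phi(x^n)\to\infty$ as $\Vert x^n\Vert\to\infty$, so $\phi$ attains a finite minimum, which I place at $0$ by translation (only $b$ is affected).

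The core step is to promote this qualitative growth to linear growth whose slope is uniform over all directions. I would pick $M_1<M_2$ such that $K_{M_1}$ has nonempty interior; then there exist radii $0<r<R$ with $B(0,r)\subseteq K_{M_1}\subseteq K_{M_2}\subseteq B(0,R)$. Along any unit vector $v$, the one-dimensional function $t\mapsto\phi(tv)$ is convex and crosses $M_1$ at some $t_1(v)\in[r,R]$ and $M_2$ at some $t_2(v)\in(t_1(v),R]$, so $t_2(v)-t_1(v)\le R-r$. The nondecreasing-slope property of one-dimensional convex functions then gives, for $t\ge t_2(v)$,
\begin{IEEEeqnarray}{rCL}
\phi(tv)&\ge& M_1+(t-t_1(v))\frac{M_2-M_1}{t_2(v)-t_1(v)}\ \ge\ M_1+(t-R)\frac{M_2-M_1}{R-r}.\nonumber
\end{IEEEeqnarray}
Setting $a=(M_2-M_1)/(R-r)>0$ and absorbing the leftover terms, together with the trivial regime $\Vert x^n\Vert\le R$ where $\phi$ is bounded below by its minimum, into a single constant $b$ yields $\phi(x^n)\ge a\Vert x^n\Vert-b$ on all of $\mathrm{dom}(\phi)$, and hence everywhere.

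The main obstacle is precisely the \emph{uniformity} of the slope $a$ across directions: a ray-by-ray argument from a single sublevel set only gives that $\phi\to\infty$, and the resulting per-direction slope could degenerate. Sandwiching two nested sublevel sets between the concentric Euclidean balls $B(0,r)$ and $B(0,R)$ is the geometric device that forces the secant slope to be at least the direction-independent quantity $(M_2-M_1)/(R-r)$, which is what turns the pointwise growth of $\phi$ into the clean dimensional bound asserted by the lemma.
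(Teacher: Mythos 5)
The paper does not actually prove Lemma~\ref{BoundedDensityatInfinity}; it imports it verbatim as \cite[Lemma~1]{CuleSamwoth2010}, so there is no in-paper argument to compare against. Your proof follows the standard route to this fact, and its core is sound: writing $f^{(n)}=e^{-\phi}$ with $\phi$ convex, using $\int f^{(n)}=1$ to get $|K_M|\le e^{M}$, deducing boundedness of the sublevel sets, and then forcing a direction-uniform slope by sandwiching two nested sublevel sets between concentric Euclidean balls is exactly the right mechanism, and your secant-slope computation is correct. (One small caveat in the preliminaries: a convex set of finite measure need not be bounded unless it has nonempty interior; this only matters for the sublevel sets containing $K_{M_1}$, which do have nonempty interior, so the conclusion $\phi\to\infty$ stands.)

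Two steps as written need repair, both fixable in a line. First, you center the balls at the minimizer of $\phi$, but the minimizer need not lie in the interior of $K_{M_1}$, nor even in the interior of the support: for the log-concave density proportional to $e^{-x_1-x_2^{2}}\mathbf{1}_{\{x_1\ge 0\}}$ the minimum of $\phi$ sits on the boundary of the support, and no ball $B(0,r)$ around it is contained in any sublevel set, so the inclusion $B(0,r)\subseteq K_{M_1}$ fails. The fix is to translate an arbitrary \emph{interior} point of $K_{M_1}$ to the origin (such a point exists since $K_{M_1}$ is convex with positive measure); the minimizer is never needed except to bound $\phi$ below on $B(0,R)$, for which $\inf\phi>-\infty$ (properness of $\phi$) suffices. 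Second, the crossing times $t_1(v),t_2(v)$ need not exist when the support of $f^{(n)}$ is bounded in the direction $v$, because $t\mapsto\phi(tv)$ can jump from a value below $M_2$ directly to $+\infty$; in that case the claimed bound is trivial where $\phi=+\infty$, and the finite part of the ray lies inside $B(0,R)$, which your bounded regime already covers. With these adjustments the argument is complete. A slightly leaner variant needs only one sublevel set: for $\Vert x^{n}\Vert\ge R+1$ write $y=(R+1)x^{n}/\Vert x^{n}\Vert$ as a convex combination of $0$ and $x^{n}$ and combine $\phi(y)>M_2$ with $\phi(0)\le M_1$ to obtain $\phi(x^{n})\ge\frac{M_2-M_1}{R+1}\Vert x^{n}\Vert+\phi(0)$ directly.
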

As the upper bound on log-concave densities in Lemma~\ref{BoundedDensityatInfinity} decays only with $e^{-a\Vert x^{n}\Vert}$ whereas $(c_1,c_2)$-regularity requires that the density decays not faster than with $e^{-\frac{c_1}{2}\Vert x^{n}\Vert^2}$, for every $(c_1,c_2)$-regular and log-concave density there exist constants $a$ and $b$ such that 
\begin{IEEEeqnarray}{rCL}
f^{(n)}(0) e^{-\left(\frac{c_1}{2}\Vert x^{n}\Vert^2 +c_2 \Vert x^{n}\Vert\right)}&\le& f^{(n)}(x^{n})\le e^{-a\Vert x^{n}\Vert+b}\quad\label{NewDecayBounds}
\end{IEEEeqnarray}
holds. I.e., densities being log-concave as well as $(c_1,c_2)$-regular have to be characterized by an exponential decay with a rate between the bounds given in (\ref{NewDecayBounds}). That means a log-concave density $f^{(n)}$ with infinite support on $\mathbb{R}^{n}$ which is differentiable and for which we can find constants $c_1$ and $c_2$ is $(c_1,c_2)$-regular in addition to being log-concave.

One example for a density falling into this class is the \emph{Logistic}-distribution whose density is given by
\begin{IEEEeqnarray}{rCL}
f(x)&=&\frac{e^x}{(1+e^x)^2}, \quad x\in \mathbb{R}.
\end{IEEEeqnarray}
Also the multivariate Gaussian density falls into this class.

In conclusion, although Theorem~\ref{TheoremPartialHyperPlane} is related to the entropic form of the hyperplane conjecture (both upper-bound $D(f^{(n)})/n$) and can be applied to any sequence of log-concave density functions $f^{(n)}$ which is $(c_1,c_2)$-regular and for which the second moments are uniformly bounded, there is the significant difference that the constant $c$ in Theorem~\ref{TheoremPartialHyperPlane} depends on the sequence of $f^{(n)}$ whereas it is a universal constant in the hyperplane conjecture.

\emph{Remark:} It has to be stated that we have only one concrete example of a sequence of $(c_1,c_2)$-regular log-concave probability density functions. Namely if the elements of $\mathsf{X}^{n}\sim f^{(n)}$ correspond to a sequence of a discrete-time zero-mean 
stationary Gaussian process $\{\mathsf{X}\}$ with autocovariance function $r(l)$ and corresponding power spectral density $S_{\mathsf{X}}(f)$, see (\ref{Def_PSD}), being supported in the entire interval $[-\frac{1}{2},\frac{1}{2}]$, then the sequence of densities $f^{(n)}$ is $(c_1,c_2)$-regular and log-concave. The condition on $S_{\mathsf{X}}(f)$ assures that the eigenvalues of the covariance matrices $\mathbf{R}_{\mathsf{X}}^{(n)}$ of $\mathsf{X}^{n}$ are bounded away from zero for any $n$. In this case we can find a $c_1$ such that (\ref{RegularityCond}) is fulfilled for all $n\in\mathbb{N}$. In this regard, consider that 
\begin{IEEEeqnarray}{rCL}
\Vert\nabla\log f^{(n)}(x^{n})\Vert&=&\Vert(\mathbf{R}^{(n)}_{\mathsf{X}})^{-1}x^{n}\Vert\nonumber\\
&\le&\Vert(\mathbf{R}^{(n)}_{\mathsf{X}})^{-1}\Vert\Vert x^{n}\Vert\label{Example_c1c2_1}\\ 
&=& \lambda_{\textrm{min}}^{-1}\Vert x^{n}\Vert.\label{Example_c1c2_2}
\end{IEEEeqnarray}
where (\ref{Example_c1c2_1}) follows with \cite[Theorem~5.6.2]{Horn} with $\Vert(\mathbf{R}^{(n)}_{\mathsf{X}})^{-1}\Vert$ being the spectral norm of $(\mathbf{R}^{(n)}_{\mathsf{X}})^{-1}$ \cite[p.~295]{Horn}. Thus, $\lambda_{\textrm{min}}$ is the minimal eigenvalue of $\mathbf{R}^{(n)}_{\mathsf{X}}$, which as stated above is bounded away from zero for all $n$.
However, for a multivariate Gaussian density it is obvious that  $D(f^{(n)})=0$ and, thus, the application of Theorem~\ref{TheoremPartialHyperPlane} is not necessary.

\section*{Acknowledgement}
The author thanks Heinrich Meyr for proofreading and discussing the paper and Martin Mittelbach and Tsachy Weissman for helpful discussions. Moreover, he thanks the reviewers for useful comments.

\bibliographystyle{IEEEtran}
\bibliography{IEEEabrv,Bib_mod_IEEE}

\end{document}